\providecommand{\U}[1]{\protect\rule{.1in}{.1in}}
\newtheorem{theorem}{Theorem}
\newtheorem{definition}{Definition}
\newtheorem{example}{Example}
\newtheorem{proof}{Proof}
\begin{document}
\begin{frontmatter}
\title{Stability of Fractional Order Switching Systems \thanksref{footnoteinfo}}
\thanks[footnoteinfo]{This work was supported by the Spanish Ministry of Science and Innovation under the project DPI2009-13438-C03.}
\author[First]{S. Hassan HosseinNia},
\author[First]{In\'es Tejado},
\author[First]{Blas M. Vinagre}
\address[First]{Department of Electrical, Electronics and Automation Engineering, School of Engineering, University of Extremadura, Badajoz, Spain\\(e-mail: \{hoseinia;itejbal;bvinagre\}@unex.es)}
\begin{abstract}                
This paper addresses the stabilization issue for fractional order switching systems. Common Lyapunov method is generalized for fractional order systems and frequency domain stability equivalent to this method is proposed to prove the quadratic stability. Some examples are given to show the applicability and effectiveness of the proposed theory.
\end{abstract}
\begin{keyword}
Fractional caculus, Switching systems, Stability, Common Lyapunov method.
\end{keyword}
\end{frontmatter}

\section{Introduction}

The past decade has witnessed an enormous interest in switched systems whose
behaviour can be described mathematically using a mixture of logic based
switching and difference/differential equations. By a switched system we mean
a hybrid dynamical system consisting of a family of continuous-time subsystems
and a rule that orchestrates the switching among them  (\cite{Liberzon03},
\cite{Daafouz02}). A primary motivation for studying such systems came partly
from the fact that switched systems and switched multi-controller systems have
numerous applications in control of mechanical systems, process control,
automotive industry, power systems, traffic control, and so on. In addition,
there exists a large class of nonlinear systems which can be stabilized by
switching control schemes, but cannot be stabilized by any continuous static
state feedback control law \cite{Lin09}.

Recent efforts in switched system research typically focus on the analysis of
dynamic behaviors, such as stability, controllability and observability, and
aim to design controllers with guaranteed stability and optimized performance
(refer to \cite{Lin09}, \cite{Shorten07} for a survey in recent results in the
field). To be more precise, the study of the stability issues of switched
systems gives rise to a number of interesting and challenging mathematical
problems, which have been of increasing interest in the recent decade.

Typically, the approach adopted to analyze these systems is to employ theories
that have been developed for differential equations. To this respect, most results are based on Lyapunov's stability theory which has played a dominant role in the analysis of dynamical systems for more than a century. Existence of quadratic Lyapunov functions for each of the constituent LTI systems is not sufficient for the stability of switched systems. However, it is well known that the switched system is stable if there exists some common Lyapunov function that satisfies the conditions of the Lyapunov theory simultaneously for all constituent subsystems (see e.g.~\cite{Liberzon03, Narendra_94, Mori_98, Shim_98}. Although \cite{Molchanov_89} established a number of converse theorems, showing that such common Lyapunov function always exists when the switched linear system is stable for arbitrary switching, general conditions for determining the existence of a common Lyapunov function for switched systems are unknown. Likewise, a frequency domain method equivalent to the common Lyapunov one may make the control and stability analysis easier. For example, \cite{Karimi_11} propose a frequency domain equivalent of common Lyapunov function based on strictly positive realness (SPR) of the system in order to analyze the quadratic stability of switching systems.

Given this context, the contribution of our work is to bring together theories
from several areas of control and to present stability issues in a unified
manner for fractional order switching systems. 

The remainder of this paper is organized as follows. Section \ref{sec_pre}
provides a collection of important issues concerning stability of switched
systems. The main contribution of this paper is presented in Section
\ref{sec_fos}, i.e., the stability theory developed for fractional order
switched systems. Section \ref{sec_examples} gives some examples to show the
applicability and goodness of the proposed stability issues. Section
\ref{sec_conclu} draws the concluding remarks.

\section{Preliminaries}\label{sec_pre}

When a system becomes unstable, the output of the system goes to infinity (or
negative infinity), which often poses a security problem in the immediate
vicinity. Also, systems which become unstable often incur a certain amount of
physical damage, which can become costly. For the sake of clarity, a
collection of important issues concerning stability of switched systems is
given in this section, mainly using Lyapunov theory.

\subsection{Stability theorems and basic definitions}

The idea behind Lyapunov's stability theory is as follows: assume there exists
a positive definite function with a unique minimum at the equilibrium. One can
think of such a function as a generalized description of the energy of the
system. If we perturb the state from its equilibrium, the energy will
initially rise. If the energy of the system constantly decreases along the
solution of the autonomous system, it will eventually bring the state back to
the equilibrium. Such functions are called Lyapunov functions. While Lyapunov
theorems generalize to nonlinear systems and locally stable equilibria we
shall only state them in the form applicable to our system class. Consider an
autonomous nonlinear dynamical system%
\begin{equation}
\dot{x}\left(  t\right)  =f(x\left(  t\right)  ),\ x\left(  0\right)
=x_{0},\label{Sys}%
\end{equation}
where $x(t)\in\mathcal{D}\subseteq\mathbb{R}^{n}$ denotes the system state
vector, $\mathcal{D}$ an open set containing the origin, and $f:\mathcal{D}%
\rightarrow\mathbb{R}^{n}$ continuous on $\mathcal{D}$. Suppose $f$ has an
equilibrium; without loss of generality, we may assume that it is at origin.
Then, Lyapunov stability for continuos systems can be summarized in the
following theorems.

\begin{theorem}
Let $x=0$ be an equilibrium point of (\ref{Sys}). Assume that there exists an
open set $\mathcal{D}$ with $0\in\mathcal{D}$ and a continuously
differentiable function $V:\mathcal{D}\rightarrow\mathbb{R}$ such that:

\begin{enumerate}
\item $V(0)=0,$

\item $V(x)>0$ for all $x\in\mathcal{D} \backslash\{0\}$, and

\item $\frac{\partial V}{\partial x}(x)f(x)\leq0$ for all $x\in\mathcal{D}.$
\end{enumerate}
then $x=0$ is a stable equilibrium point of (\ref{Sys}).
\end{theorem}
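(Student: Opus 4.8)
The statement is the classical Lyapunov stability theorem, so the plan is to prove stability directly from the $\varepsilon$-$\delta$ definition of a stable equilibrium: given any $\varepsilon>0$, I must produce a $\delta>0$ such that every solution with $\|x_0\|<\delta$ satisfies $\|x(t)\|<\varepsilon$ for all $t\geq 0$. First I would shrink $\varepsilon$ if necessary so that the closed ball $\overline{B}_\varepsilon=\{x:\|x\|\leq\varepsilon\}$ is contained in $\mathcal{D}$, which is possible because $\mathcal{D}$ is open and contains the origin. This keeps the hypotheses on $V$ available along the whole piece of trajectory I intend to control.

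The core of the argument is a sublevel-set (trapping-region) construction. The sphere $S_\varepsilon=\{x:\|x\|=\varepsilon\}$ is compact and does not contain the origin, so by hypotheses (1) and (2) the continuous function $V$ attains on it a strictly positive minimum $m=\min_{\|x\|=\varepsilon}V(x)>0$. I would then fix any constant $c$ with $0<c<m$ and consider the set $\Omega_c=\{x\in B_\varepsilon : V(x)\leq c\}$. By continuity of $V$ and $V(0)=0$, there is a $\delta\in(0,\varepsilon)$ with $V(x)<c$ whenever $\|x\|<\delta$; hence $B_\delta\subseteq\Omega_c$, which fixes the required $\delta$.

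Next I would use hypothesis (3) to show invariance of $\Omega_c$. Along any solution, $\frac{d}{dt}V(x(t))=\frac{\partial V}{\partial x}(x(t))f(x(t))\leq 0$, so $V(x(t))\leq V(x_0)\leq c$ for as long as the trajectory remains in $B_\varepsilon$. The point I expect to be the main obstacle is ruling out that the trajectory escapes through $S_\varepsilon$: by a continuity argument, if the solution starting in $B_\delta$ ever reached $\|x(t)\|=\varepsilon$, then at that instant $V(x(t))\geq m>c$, contradicting $V(x(t))\leq c$. Since the trajectory cannot cross $S_\varepsilon$ and starts strictly inside it, it stays in $B_\varepsilon$ for all $t\geq 0$, giving $\|x(t)\|<\varepsilon$ and establishing stability of the origin. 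The remaining care is to note that existence of the solution on $[0,\infty)$ follows because the trajectory remains in the compact set $\overline{B}_\varepsilon\subseteq\mathcal{D}$ where $f$ is continuous, so it cannot blow up in finite time.
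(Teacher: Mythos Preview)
Your argument is correct and is the standard sublevel-set proof of Lyapunov's stability theorem. Note, however, that the paper does not actually supply a proof of this statement: it appears in the Preliminaries section as a classical background result, stated without justification, so there is no ``paper's own proof'' to compare against. Your write-up matches the textbook treatment (e.g., Khalil, \emph{Nonlinear Systems}) essentially line for line: shrink $\varepsilon$ so that $\overline{B}_\varepsilon\subset\mathcal{D}$, set $m=\min_{\|x\|=\varepsilon}V(x)>0$, trap the trajectory in the sublevel set $\{V\le c\}\cap B_\varepsilon$ for some $0<c<m$, and use $\dot V\le 0$ together with the barrier $V\ge m$ on $S_\varepsilon$ to prevent escape. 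The only point worth tightening is the last sentence: continuity of $f$ alone gives existence (Peano) but not uniqueness, so phrase the conclusion as ``every solution starting in $B_\delta$ remains in $B_\varepsilon$'' rather than implicitly assuming a unique trajectory; the continuation argument via confinement to the compact set $\overline{B}_\varepsilon$ is otherwise fine.
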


\begin{theorem}
If, in addition, $\frac{\partial V}{\partial x}(x)f(x)\leq0$ for all
$x\in\mathcal{D} \backslash\{0\}$, then $x=0$ is an asymptotically stable
equilibrium point.
\end{theorem}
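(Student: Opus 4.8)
The plan is to build on Theorem~1, which already furnishes Lyapunov stability of the origin; what remains is to establish attractivity, i.e.\ that trajectories initiated sufficiently close to the origin actually converge to it. I read the additional hypothesis as strict negativity $\frac{\partial V}{\partial x}(x) f(x) < 0$ on $\mathcal{D} \setminus \{0\}$, since mere non-positivity merely reproduces the hypothesis of Theorem~1 and cannot yield asymptotic behaviour.

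First I would fix $\epsilon > 0$ small enough that the closed ball $\overline{B}_\epsilon \subset \mathcal{D}$, and invoke Theorem~1 to obtain $\delta > 0$ such that $\|x_0\| < \delta$ forces $x(t) \in B_\epsilon$ for all $t \geq 0$. Taking such a trajectory, along it one has $\frac{d}{dt} V(x(t)) = \frac{\partial V}{\partial x}(x(t)) f(x(t)) \leq 0$, so $t \mapsto V(x(t))$ is nonincreasing; being bounded below by $0$, it converges to some limit $c \geq 0$. The heart of the argument is to show $c = 0$.

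Suppose for contradiction that $c > 0$. By continuity and positive definiteness of $V$, there is a radius $r > 0$ with $V(x) < c$ whenever $\|x\| \leq r$; since $V(x(t)) \geq c$ for all $t$, the trajectory is trapped in the compact shell $K = \{\, x : r \leq \|x\| \leq \epsilon \,\}$. On $K$ the continuous function $\frac{\partial V}{\partial x}(x) f(x)$ attains a maximum $-\gamma$ with $\gamma > 0$, and this is exactly where strict negativity off the origin is used. Integrating $\frac{d}{dt} V(x(t)) \leq -\gamma$ then gives $V(x(t)) \leq V(x_0) - \gamma t$, which drives $V$ below zero for large $t$, contradicting $V \geq 0$. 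Hence $c = 0$, and positive definiteness of $V$ forces $x(t) \to 0$.

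I expect the compactness step to be the main obstacle: one must confine the trajectory to a set bounded away from the origin in order to extract the uniform decay rate $\gamma$, and this hinges both on the stability already supplied by Theorem~1 (to keep the trajectory inside $B_\epsilon$) and on positive definiteness of $V$ (to keep it outside $B_r$). The remaining steps, namely the monotone convergence of $V(x(t))$ and the final integration, are routine once this uniform bound is in hand.
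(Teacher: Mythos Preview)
Your argument is correct and is the standard proof of Lyapunov's asymptotic stability theorem; the paper, however, offers no proof of this statement at all --- Theorems~1 and~2 are simply quoted in the preliminaries as classical background, so there is nothing in the paper to compare your proposal against. Your reading of the hypothesis as strict negativity $\frac{\partial V}{\partial x}(x)f(x)<0$ on $\mathcal{D}\setminus\{0\}$ is also the right call: as written, the $\leq$ is evidently a typo, since it merely repeats condition~3 of Theorem~1 and could not possibly force asymptotic stability.
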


\begin{definition}
[Quadratic Stability] A linear system
\begin{equation}
\dot{x}=Ax,\label{QSS}%
\end{equation}
is said to be quadratically stable in $\mathbb{R}$ if there exists a positive
definite matrix $P\in\mathbb{R}^{n\times n}$ such that,
\[
A^{T}P+PA<0.
\]
\label{LMIQS}
\end{definition}

\begin{definition}
[$t^{-a }$ Stability] The trajectory $x(t)=0$ of the system $\frac{d^{\alpha}%
x(t)}{dt^{\alpha}}=f(t,x(t)) $ is $t^{-a }$ asymptotically stable if the
uniform asymptotic stability condition is met and if there is a positive real
$a $ such that :\newline$\forall\left\Vert x\left(  t\right)  \right\Vert
,t\leq t_{0}\ \exists\ N\left(  x\left(  t\right)  ,t\leq t_{0}\right)
,\ t_{1}\left(  x\left(  t\right)  ,t\leq t_{0}\right)$ such that $\forall
t\leq t_{0},\ \left\Vert x\left(  t\right)  \right\Vert \leq\ N\left(
t-t_{1}\right)  ^{-a }.$
\end{definition}

$t^{-a }$ stability will thus be used to refer to the asymptotic stability of
fractional systems. The fact that the components of the state $x(t)$ 
decay slowly towards $0$ following $t^{-a }$ leads to fractional systems sometimes
being treated as long memory systems.

Let us consider a fractional order linear time invariant (FO-LTI) system as:
\begin{equation}
D^{\alpha}x=Ax, x\in\mathbb{R}^{n}\label{FOLTI}%
\end{equation}
where $\alpha$ is the fractional order.

\begin{theorem}
[\cite{Moze}]A fractional system given by (\ref{FOLTI}) with order $\alpha$, $1
\leq\alpha< 2$, is $t^{-a}$ asymptotically stable if and only if there exists a
matrix $P=P^{T} > 0$, $P \in\mathbb{R}^{n \times n}$, such that
\begin{equation}
\label{FSQ}%
\begin{bmatrix}
\left(  A^{T}P+PA \right) \sin\left(\phi \right)  & \left(
A^{T}P-PA \right) \cos\left( \phi \right) \\
\left(  -A^{T}P+PA \right) \cos\left(\phi \right)  & \left(
A^{T}P+PA \right) \sin\left(\phi \right)
\end{bmatrix}<0,
\end{equation}

where $\phi=\frac{\alpha\pi}{2}$.
\label{Moze}
\end{theorem}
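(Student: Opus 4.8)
The plan is to reduce the stability statement to an eigenvalue-location problem and then to characterize that location region by a Lyapunov-type LMI. First I would invoke the fractional analogue of the Routh--Hurwitz (Matignon) criterion: the system \eqref{FOLTI} is $t^{-a}$ asymptotically stable if and only if every eigenvalue $\lambda$ of $A$ satisfies $|\arg(\lambda)| > \phi$ with $\phi = \alpha\pi/2$. The fractional flavour of the decay — the polynomial $t^{-\alpha}$ rate of Definition 4 rather than an exponential one — comes from the asymptotics of the Mittag--Leffler function $E_\alpha$, through which the fundamental solution of \eqref{FOLTI} is expressed: when the spectrum lies in the open sector $\{|\arg \lambda| > \phi\}$, $E_\alpha$ decays like $t^{-\alpha}$, yielding precisely the long-memory $t^{-a}$ stability. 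This step converts the dynamical question into the geometric one of placing $\mathrm{spec}(A)$ in the conic sector $\mathcal{D}_\phi = \{z \in \mathbb{C} : |\arg z| > \phi\}$.

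Next I would exhibit $\mathcal{D}_\phi$ as an LMI region. Setting
\[
M = \begin{bmatrix} \sin\phi & \cos\phi \\ -\cos\phi & \sin\phi \end{bmatrix},
\]
a direct computation with $z = x + iy$ shows that the Hermitian pencil $zM + \bar z M^{T}$ equals $\left[\begin{smallmatrix} 2x\sin\phi & 2iy\cos\phi \\ -2iy\cos\phi & 2x\sin\phi \end{smallmatrix}\right]$, whose eigenvalues are $2x\sin\phi \pm 2|y\cos\phi|$. Hence $zM + \bar z M^{T} < 0$ holds exactly when $x\sin\phi + |y\cos\phi| < 0$, which (since $\sin\phi>0$ and $\cos\phi<0$ on $[\pi/2,\pi)$) is equivalent to $|\arg z| > \phi$, so that $\mathcal{D}_\phi = \{z : zM + \bar z M^{T} < 0\}$. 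The hypothesis $\alpha \ge 1$ is essential here and is the main structural point of the argument: it forces $\phi \ge \pi/2$, which makes $\mathcal{D}_\phi$ a convex cone about the negative real axis and therefore representable by a single LMI; for $\alpha < 1$ the sector is non-convex and no such quadratic certificate exists, which is exactly why the statement is confined to $1 \le \alpha < 2$.

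Finally I would apply the generalized (Chilali--Gahinet) Lyapunov theorem for LMI regions: a real matrix $A$ has all its eigenvalues in $\mathcal{D}_\phi = \{z : zM + \bar z M^{T} < 0\}$ if and only if there exists $P = P^{T} > 0$ with
\[
M \otimes (A^{T}P) + M^{T} \otimes (PA) < 0.
\]
Expanding the Kronecker products block by block and using $P = P^{T}$ produces a $2n \times 2n$ matrix with diagonal blocks $\sin\phi\,(A^{T}P + PA)$ and off-diagonal blocks $\cos\phi\,(A^{T}P - PA)$ and $\cos\phi\,(-A^{T}P + PA)$, which is exactly the matrix in \eqref{FSQ} (up to a congruence by $\mathrm{diag}(I,-I)$ that merely fixes the orientation convention for $M$ and hence preserves sign-definiteness). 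Chaining the three equivalences — asymptotic stability $\Leftrightarrow$ $\mathrm{spec}(A) \subset \mathcal{D}_\phi$ $\Leftrightarrow$ existence of $P > 0$ satisfying \eqref{FSQ} — establishes both directions of Theorem \ref{Moze}. I expect the delicate part to be the necessity (``only if'') half of the Lyapunov-region equivalence, together with the verification that the convex-cone condition $\phi \ge \pi/2$ is precisely what legitimises the single-LMI representation.
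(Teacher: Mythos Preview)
The paper does not actually prove this theorem: it is quoted in the Preliminaries section as a result from \cite{Moze} and is used later as a black box (in the proof of Theorem~\ref{FSQ}). So there is no in-paper proof to compare against.

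That said, your argument is correct and is essentially the standard derivation of the cited result. Matignon's criterion reduces $t^{-a}$ asymptotic stability of \eqref{FOLTI} to the spectral condition $|\arg\lambda|>\phi$; for $1\le\alpha<2$ this sector is a convex cone and hence an LMI region with the $2\times2$ multiplier $M$ you wrote; and the Chilali--Gahinet $\mathcal{D}$-stability theorem then gives the existence of $P=P^{T}>0$ satisfying the $2n\times 2n$ block inequality. Your identification of the convexity requirement $\phi\ge\pi/2$ as the reason for the restriction $1\le\alpha<2$ is exactly the point, and it is also why the companion Theorem~\ref{Moze0} for $0<\alpha\le1$ needs a different device. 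One cosmetic remark: with your choice of $M$ the expansion of $M\otimes(A^{T}P)+M^{T}\otimes(PA)$ already produces the displayed block matrix verbatim, so the $\mathrm{diag}(I,-I)$ congruence you mention is not needed.
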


\begin{theorem}
[\cite{Moze}] A fractional order system given by (\ref{FOLTI}) with order $\alpha$,
$0<\alpha\leq1$, is $t^{-a }$ asymptotically stable if and
only if there exists a positive definite matrix $P\in\mathbb{R}^{n}$ such that
\begin{equation}
\left(  -\left(  -A\right)  ^{\frac{1}{2-\alpha}}\right)  ^{T}P+P\left(
-\left( -A\right)  ^{\frac{1}{2-\alpha}}\right)  <0.
\end{equation}
\label{Moze0}
\end{theorem}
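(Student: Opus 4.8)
The final statement is Theorem \ref{Moze0}: a FO-LTI system $D^\alpha x = Ax$ with $0 < \alpha \le 1$ is $t^{-a}$ asymptotically stable iff there's a positive definite $P$ with $(-(-A)^{1/(2-\alpha)})^T P + P(-(-A)^{1/(2-\alpha)}) < 0$.

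**Key background.** Fractional stability (Matignon's theorem): $D^\alpha x = Ax$ with $0<\alpha\le 1$ is stable iff $|\arg(\lambda_i(A))| > \alpha\pi/2$ for all eigenvalues. The trick: map this fractional stability region to the standard left-half-plane (LHP) condition (quadratic stability, Definition \ref{LMIQS}) via a matrix transformation.

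**The plan.** The idea is to reduce the $0<\alpha\le 1$ case to the classical integer-order quadratic stability criterion by a suitable transformation of $A$, then apply Definition \ref{LMIQS}.

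First I would recall the eigenvalue characterization of $t^{-a}$ asymptotic stability for (\ref{FOLTI}): the system is stable iff every eigenvalue $\lambda$ of $A$ satisfies $|\arg(\lambda)| > \alpha\pi/2$. The whole proof hinges on showing that the matrix map $A \mapsto B := -(-A)^{1/(2-\alpha)}$ carries this fractional stability sector exactly onto the open left half plane $\operatorname{Re}(\mu)<0$.

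The key step is the spectral mapping computation. For an eigenvalue $\lambda = r e^{i\theta}$ of $A$, the corresponding eigenvalue of $B$ is $\mu = -(-\lambda)^{1/(2-\alpha)}$. Writing $-\lambda = r e^{i(\theta - \pi)}$ (choosing the branch so arguments lie in $(-\pi,\pi]$) and taking the $1/(2-\alpha)$ power, then negating, one tracks how the argument transforms. I would verify that the fractional stability condition $|\arg(\lambda)| > \alpha\pi/2$ is equivalent to $|\arg(\mu)| > \pi/2$, i.e. $\operatorname{Re}(\mu) < 0$. Concretely, the boundary rays $\arg(\lambda) = \pm\alpha\pi/2$ must be shown to map to $\arg(\mu) = \pm\pi/2$, and the stable sector to the open LHP.

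Once the spectral equivalence is established, the conclusion is immediate: by the classical Lyapunov theorem for integer-order systems, the eigenvalues of $B$ all lie in the open LHP if and only if there exists $P = P^T > 0$ with $B^T P + P B < 0$, which upon substituting $B = -(-A)^{1/(2-\alpha)}$ is exactly the stated inequality. I would cite Definition \ref{LMIQS} (quadratic stability) to furnish this LMI characterization.

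**Main obstacle.** The delicate part will be the branch-of-the-power bookkeeping in the spectral mapping, making sure the chosen branch of $(-A)^{1/(2-\alpha)}$ is well defined for matrices (no eigenvalues on the branch cut) and that the argument arithmetic is handled consistently on the stability boundary. A secondary subtlety is the passage from the scalar eigenvalue computation to the matrix statement: strictly one should confirm $B$ is well-defined via the matrix function $(-A)^{1/(2-\alpha)}$ and that $\operatorname{spec}(B) = \{-(-\lambda)^{1/(2-\alpha)} : \lambda \in \operatorname{spec}(A)\}$, so the scalar analysis transfers faithfully. I expect the rest to follow routinely from the classical Lyapunov equivalence.
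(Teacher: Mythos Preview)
Your proposal outlines a plausible route to the result, but there is nothing in the paper to compare it against: Theorem~\ref{Moze0} is stated in the Preliminaries as a known result imported from \cite{Moze} and is given \emph{without proof}. The paper simply quotes the LMI characterization and later invokes it (in the proofs of Theorems~\ref{FSQ00} and~\ref{Freq_stab_frac0}); it does not reproduce Moze's argument.

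On the substance of your sketch: the spectral-mapping strategy you describe---reduce to Matignon's eigenvalue criterion $|\arg\lambda|>\alpha\pi/2$, show that the map $\lambda\mapsto -(-\lambda)^{1/(2-\alpha)}$ carries the fractional stability sector bijectively onto the open left half-plane, and then invoke the classical Lyapunov LMI---is indeed the argument underlying the cited result. The branch bookkeeping you flag as the main obstacle is real: you need the eigenvalues of $A$ to avoid the branch cut of $(-\cdot)^{1/(2-\alpha)}$ (in particular $A$ should have no eigenvalue on the nonnegative real axis, which is guaranteed in the stable case since $|\arg\lambda|>\alpha\pi/2>0$), and you need the angle arithmetic to be carried out with the principal branch so that the boundary rays $\arg\lambda=\pm\alpha\pi/2$ map to $\arg\mu=\pm\pi/2$. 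That computation goes through, but it is the place where a careless sign or branch choice would break the biconditional. Your secondary concern---that the scalar spectral computation lifts to the matrix function---is handled by the standard spectral mapping theorem for primary matrix functions, so no extra work is needed there.
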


\subsection{Common quadratic Lyapunov functions}

Consider a switched system as follows:
\begin{equation}
\dot{x}=Ax, A \in co\left\{ A_{1}, ..., A_{L} \right\} ,
\label{Conv}
\end{equation}
where $"co"$ denotes the convex combination and $A_{i}, i=1,...,L$ is the switching subsystem. According to \cite{Pardalos_87},  (\ref{Conv}) can be alternatively written as:
\begin{equation}
\label{SWHM}\dot{x}=Ax, A=\sum_{i=1}^{L} \lambda_{i} A_{i}, \forall\lambda_{i}
\geq0, \sum_{i=1}^{L} \lambda_{i}=1.
\end{equation}

\begin{theorem}
[\cite{Boyd_94}]A system given by (\ref{SWHM}) is quadratically stable if
and only if there exists a matrix $P=P^{T} > 0$, $P \in\mathbb{R}^{n}\times
n$, such that
\[
\label{SWST}A_{i}^{T}P+PA_{i}<0, \forall i=1, ..., L.
\]
\end{theorem}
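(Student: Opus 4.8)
The plan is to exploit the fact that, for a fixed symmetric $P$, the Lyapunov operator $A\mapsto A^{T}P+PA$ is \emph{linear} in $A$, so that feasibility of the strict LMI at the generators $A_{1},\dots,A_{L}$ automatically propagates to the entire convex hull. In other words, the whole statement reduces to a convexity argument combined with the notion of quadratic stability of Definition~\ref{LMIQS}.

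For the necessity direction ($\Rightarrow$), I would argue that quadratic stability of (\ref{SWHM}) means there is a single $P=P^{T}>0$ with $A^{T}P+PA<0$ for every admissible $A=\sum_{i}\lambda_{i}A_{i}$. Each vertex $A_{i}$ is itself admissible, obtained by setting $\lambda_{i}=1$ and $\lambda_{j}=0$ for $j\neq i$; substituting yields $A_{i}^{T}P+PA_{i}<0$ for all $i$. This direction needs nothing more than the inclusion of the vertices in the convex hull.

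For the sufficiency direction ($\Leftarrow$), suppose $P=P^{T}>0$ satisfies $A_{i}^{T}P+PA_{i}<0$ for every $i$. For an arbitrary admissible $A=\sum_{i=1}^{L}\lambda_{i}A_{i}$, linearity gives
\begin{equation*}
A^{T}P+PA=\sum_{i=1}^{L}\lambda_{i}\left(A_{i}^{T}P+PA_{i}\right).
\end{equation*}
Testing against any nonzero $v$ produces $v^{T}(A^{T}P+PA)v=\sum_{i}\lambda_{i}\,v^{T}(A_{i}^{T}P+PA_{i})v$, a nonnegative combination of strictly negative terms; since $\sum_{i}\lambda_{i}=1$ forces at least one weight to be positive, the sum is strictly negative, so $A^{T}P+PA<0$. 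Hence the same $P$ certifies quadratic stability of $\dot{x}=Ax$ for every matrix in the hull, which is exactly quadratic stability of the switched system.

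The step that must be handled with care — rather than a genuine obstacle — is the claim that a nonnegative combination of negative definite symmetric matrices is again negative definite; this holds only when the weights are not all zero, a condition guaranteed here by the normalization $\sum_{i}\lambda_{i}=1$. Everything else follows from the linearity of $A\mapsto A^{T}P+PA$, which is precisely what collapses the convex-hull condition onto the finite set of vertex inequalities.
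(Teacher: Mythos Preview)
The paper does not supply its own proof of this statement; it is quoted as a known result from \cite{Boyd_94} in the preliminaries section. Your argument is correct and is exactly the standard one: linearity of $A\mapsto A^{T}P+PA$ reduces the convex-hull condition to the vertex LMIs, with necessity following from the inclusion of the vertices and sufficiency from the fact that a convex combination of negative definite matrices (with weights summing to one) is negative definite.
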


\subsection{Quadratic stability in frequency domain}

\cite{Karimi_11} propose an equivalent to common Lyapunov stability conditions in frequency domain. The relation between SPRness and the quadratic stability can be stated in the following theorem. For further information about the specification of state space system, refer to Section~\ref{sec_fos}.

\begin{theorem} [\cite{Karimi_11}]Consider $c_{1}(s)$ and $c_{2}(s)$, two stable polynomials of order $n$, corresponding to the systems $\dot{x}=A_{1}x$ and
$\dot{x}=A_{2}x$, respectively, then the following statements are equivalent:

\begin{enumerate}
\item {$\frac{c_{1}(s)}{c_{2}(s)}$ and $\frac{c_{2}(s)}{c_{1}(s)}$ are SPR.}

\item {$\left|  \arg(c_{1}(j\omega)) - \arg(c_{2}(j\omega))\right|  <
\frac{\pi}{2}$ $\forall$ $\omega$.}

\item { $A_{1}$ and $A_{2}$ are quadratically stable, which means that $\exists P
=P^{T} >0 \in\mathbb{R}^{n\times n}$ such that $A_{1}^{T}P+PA_{1} <0$ ,
$A_{2}^{T}P+PA_{2} <0$.}
\end{enumerate}

\label{Freq_stab}
\end{theorem}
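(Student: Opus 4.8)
The plan is to prove the cyclic/parallel equivalences $(1)\Leftrightarrow(2)$ and $(1)\Leftrightarrow(3)$, which together establish that all three statements are equivalent. The two halves rest on very different machinery: $(1)\Leftrightarrow(2)$ is an elementary observation about the geometry of complex numbers, while $(1)\Leftrightarrow(3)$ requires the Kalman--Yakubovich--Popov (positive real) lemma together with the special companion-form structure of the realizations.

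For $(1)\Leftrightarrow(2)$, I would begin from the definition of strict positive realness. Because $c_1$ and $c_2$ are stable polynomials of equal degree $n$, the ratio $H(s)=c_1(s)/c_2(s)$ is proper with no poles in the closed right half-plane, so SPRness collapses to the frequency condition $\mathrm{Re}\,H(j\omega)>0$ for all $\omega$ (the equal-degree, monic structure makes $H(\infty)=1>0$, which settles the behaviour at infinity). Writing each factor in polar form gives
\[
H(j\omega)=\frac{c_1(j\omega)}{c_2(j\omega)}=\frac{\lvert c_1(j\omega)\rvert}{\lvert c_2(j\omega)\rvert}\,e^{\,j\left(\arg c_1(j\omega)-\arg c_2(j\omega)\right)},
\]
so $\mathrm{Re}\,H(j\omega)>0$ holds exactly when $\cos\!\left(\arg c_1-\arg c_2\right)>0$, i.e. $\lvert\arg c_1(j\omega)-\arg c_2(j\omega)\rvert<\pi/2$. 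The identical computation for $c_2/c_1$ yields the same phase condition, so statement (2) is simultaneously equivalent to the SPRness of both ratios, which is statement (1).

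For $(1)\Leftrightarrow(3)$, I would realize $A_1$ and $A_2$ in controllable companion form, so they share the input vector $B=e_n$ and differ only in their last rows; hence $A_2=A_1-BC$ for a row vector $C$ formed from the coefficient differences of $c_1$ and $c_2$, and $H(s)=c_2(s)/c_1(s)=1+C(sI-A_1)^{-1}B$ is a realization with $D=1$. Applying the positive real lemma to this $H$ produces $P=P^{T}>0$, a matrix $L$ and a matrix $W$ with $A_1^{T}P+PA_1=-LL^{T}$, $PB=C^{T}-LW$ and $W^{T}W=D+D^{T}=2$; the first identity already gives $A_1^{T}P+PA_1<0$. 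The crucial step is to substitute $A_2=A_1-BC$ and complete the square, using the KYP identities to eliminate $PB$:
\[
A_2^{T}P+PA_2=A_1^{T}P+PA_1-\left(C^{T}B^{T}P+PBC\right)=-\left(L-C^{T}W^{T}\right)\left(L-C^{T}W^{T}\right)^{T}\le0,
\]
so the \emph{same} $P$ certifies the Lyapunov inequality for $A_2$. Invoking SPRness of both $c_2/c_1$ and $c_1/c_2$ (equivalently, the strict form of the lemma on the minimal companion realization) is what upgrades both inequalities to strict negative definiteness, yielding the common $P$ of statement (3); the construction reverses to recover (1) from (3).

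I expect the main obstacle to be the $(1)\Leftrightarrow(3)$ direction, and specifically its dependence on the companion-form rank-one difference $A_1-A_2=-BC$: the completion-of-squares identity only closes because the two realizations share $B=e_n$, so I must justify that this canonical form is available and that its properties (controllability, and observability via $C\neq0$ when $c_1\neq c_2$) hold. The most delicate point is promoting the semidefinite bound $A_2^{T}P+PA_2\le0$ to a \emph{strict} inequality with a single common $P$; this is precisely where requiring both reciprocal ratios to be SPR enters, and where the boundary conditions of the positive real lemma must be handled with care.
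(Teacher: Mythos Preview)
The paper does not prove this theorem at all: it is stated as a preliminary result quoted from \cite{Karimi_11} (and ultimately from the Shorten--Narendra line of work), with no accompanying proof environment. So there is nothing in the paper's own argument to compare your proposal against.

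That said, your sketch is the standard route to this result and is essentially sound. The equivalence $(1)\Leftrightarrow(2)$ is exactly the polar--form computation you give; since $c_1,c_2$ are monic, stable and of the same degree, SPRness of the ratio collapses to $\mathrm{Re}\,H(j\omega)>0$, which is the phase condition. For $(1)\Leftrightarrow(3)$ your use of the companion realization, the rank--one difference $A_2=A_1-BC$, and the KYP identities to complete the square is precisely the Shorten--Narendra mechanism. Two small points of care: first, the paper's companion form (\ref{ab1}) carries the coefficients in the \emph{first} row, so the shared input direction is $e_1$ rather than $e_n$; this is cosmetic but worth matching. Second, you correctly flag that the completion of squares only yields $A_2^{T}P+PA_2\le 0$; the clean way to obtain the strict inequality with a \emph{single} common $P$ is to invoke the strict form of the KYP lemma (valid because the companion realization is minimal and $H$ is SPR, not merely PR), which produces an extra $-\varepsilon P$ term in the Lyapunov identity and propagates strictness through the square--completion for $A_2$.
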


\section{Quadratic stability of fractional order switching systems}\label{sec_fos}

This section will study two ways to obtain the quadratic stability of fractional order switching systems generalizing common Lyapunov functions for fractional order switching systems and obtaining an equivalent in frequency domain, respectively.

\subsection{Common quadratic Lyapunov functions of fractional order system}
Let us consider a fractional order switched system as: 
\begin{equation}
D^{\alpha}{x}=Ax, A \in co\left\{ A_{1}, ..., A_{L} \right\} ,\label{FSWHM}%
\end{equation}
where $\alpha$ is the fractional order.

\begin{theorem}
A fractional system described by (\ref{FSWHM}) with order $\alpha$, $1 \leq\alpha< 2$, is quadratically stable if and only if there exists a matrix $P=P^{T} > 0$, $P
\in\mathbb{R}^{n \times n}$, such that
\begin{align}%
\begin{bmatrix}
\left(  A_{i}^{T}P+PA_{i} \right) \sin\left( \phi \right)  &
\left(  A_{i}^{T}P-PA_{i} \right) \cos\left( \phi \right) \\
\left(  -A_{i}^{T}P+PA_{i} \right) \cos\left( \phi\right)  &
\left(  A_{i}^{T}P+PA_{i} \right) \sin\left( \phi \right)
\end{bmatrix}
<0,\nonumber\\
\forall i=1,..., L.
\end{align}
\label{FSQ}
\end{theorem}

\begin{proof}
System (\ref{FSWHM}) can be rewritten as:
\begin{equation}
D^{\alpha}{x}=Ax, A=\sum_{i=1}^{L} \lambda_{i} A_{i}, \forall\lambda_{i}
\geq0, \sum_{i=1}^{L} \lambda_{i}=1.
\end{equation}
Then, from Theorem {\ref{Moze}}, and (\ref{FSWHM}), we have
\begin{eqnarray*}
\begin{bmatrix}
\left (\mathcal{M}^TP+P\mathcal{M} \right )\sin(\phi) & \left ( \mathcal{M}^TP-P\mathcal{M} \right )\cos(\phi)\\ 
 \left (- \mathcal{M}^TP+P\mathcal{M} \right )\cos(\phi) & \left (\mathcal{M}^TP+P\mathcal{M} \right )\sin(\phi)
\end{bmatrix}, \\ \nonumber
\forall \lambda_i \geq 0, \sum_{i=1}^{L} \lambda_i=1\\ \nonumber
\Leftrightarrow \sum_{i=1}^{L} \lambda_i 
\left ( \begin{bmatrix}
\left ( A_i^TP+PA_i \right )\sin(\phi) & \left ( A_i^TP-PA_i \right )\cos(\phi)\\ 
\left ( -A_i^TP+PA_i \right )\cos(\phi) & \left ( A_i^TP+PA_i \right )\sin(\phi)
\end{bmatrix} \right ), \\ \forall \lambda_i \geq 0, \sum_{i=1}^{L} \lambda_i=1.
\end{eqnarray*}
where $\mathcal{M}=\sum_{i=1}^{L} \lambda_i A_i$ and $\phi=\frac{\alpha \pi}{2}$. Therefore, it is obvious that (\ref{FSWHM}) is quadratically stable if and only if
\begin{eqnarray}
\nonumber
\begin{bmatrix}
\left ( A_i^TP+PA_i \right )\sin(\phi) & \left ( A_i^TP-PA_i \right )\cos(\phi)\\ 
\left ( -A_i^TP+PA_i \right )\cos(\phi) & \left ( A_i^TP+PA_i \right )\sin(\phi)
\end{bmatrix}<0, \\ \forall i=1, ..., L.
\end{eqnarray}

\end{proof}

\begin{theorem}
A fractional system given by (\ref{FSWHM}) with order $\alpha$, $0<\alpha\leq1$, is
quadratically stable if and only if there exists a matrix $P=P^{T} > 0$, $P
\in\mathbb{R}^{n \times n}$, such that
\begin{equation}
\left(  -\left(  -A_{i}\right)  ^{\frac{1}{2-\alpha}}\right)  ^{T}P+P\left(
-\left( -A_{i}\right)  ^{\frac{1}{2-\alpha}}\right)  <0, \forall i=1, ..., L.
\end{equation}
\label{FSQ00}
\end{theorem}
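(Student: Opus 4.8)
The plan is to follow the template of the preceding theorem, but with the single‑system characterisation of Theorem~\ref{Moze0} (valid for $0<\alpha\le1$) in place of Theorem~\ref{Moze}. First I would rewrite the switched system (\ref{FSWHM}) in the convex‑combination form $D^{\alpha}x=\mathcal{M}x$ with $\mathcal{M}=\sum_{i=1}^{L}\lambda_i A_i$, $\lambda_i\ge 0$, $\sum_{i=1}^{L}\lambda_i=1$, exactly as in (\ref{SWHM}), so that quadratic stability of (\ref{FSWHM}) is the demand that one fixed $P=P^{T}>0$ certify the stability of every member of the polytope. Applying Theorem~\ref{Moze0} to each such $\mathcal{M}$, this is equivalent to requiring $\big(-(-\mathcal{M})^{1/(2-\alpha)}\big)^{T}P+P\big(-(-\mathcal{M})^{1/(2-\alpha)}\big)<0$ for all admissible $\lambda$, with a common $P$. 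The necessity direction ($\Rightarrow$) is then immediate: each vertex $A_i$ lies in the polytope (take $\lambda$ equal to the $i$‑th unit vector), so the common‑$P$ condition specialises to the claimed inequalities $\big(-(-A_i)^{1/(2-\alpha)}\big)^{T}P+P\big(-(-A_i)^{1/(2-\alpha)}\big)<0$ for each $i$.

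The sufficiency direction ($\Leftarrow$) is where the argument must depart from that of the $1\le\alpha<2$ case, and it is the main obstacle. In that earlier range the Moze LMI is affine in $A$, so the block matrix formed from $\mathcal{M}$ equals $\sum_{i}\lambda_i$ times the block matrix formed from $A_i$, and negativity at the vertices propagates to the whole hull simply by convexity of the negative‑definite cone. For $0<\alpha\le1$ this linearity is lost: the map $A\mapsto-(-A)^{1/(2-\alpha)}$ is a nonlinear fractional matrix power, so in general $-(-\mathcal{M})^{1/(2-\alpha)}\ne\sum_{i}\lambda_i\big(-(-A_i)^{1/(2-\alpha)}\big)$, and the vertex inequalities cannot be averaged to recover the condition at an arbitrary $\mathcal{M}$. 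Closing this gap is the crux of the proof and the step I expect to require genuine work rather than a verbatim transcription of the previous theorem.

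To handle it I would introduce the transformed matrices $\hat A_i:=-(-A_i)^{1/(2-\alpha)}$ and read the claimed inequalities as the common Lyapunov conditions $\hat A_i^{T}P+P\hat A_i<0$ of an associated integer‑order family; by the common‑Lyapunov characterisation of quadratic stability (\cite{Boyd_94}) these are precisely the conditions for quadratic stability of $\dot x=\hat A x$, $\hat A\in co\{\hat A_1,\dots,\hat A_L\}$. The decisive point is then to transfer this transformed integer‑order stability back to the fractional polytope, i.e. to show that the common $P$ furnished at the vertices still satisfies the Theorem~\ref{Moze0} inequality for every $\mathcal{M}$ in the hull. I expect this to rest on a spectral‑mapping property of $z\mapsto-(-z)^{1/(2-\alpha)}$, which sends the fractional stability sector $|\arg z|>\tfrac{\alpha\pi}{2}$ onto the open left half‑plane $\{\Re z<0\}$; controlling how this sector is preserved under convex combinations of the $A_i$ while holding $P$ fixed is the hard part, and the place where an additional monotonicity or convexity estimate for the fractional power is most likely to be needed.
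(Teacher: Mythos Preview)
Your diagnosis of the obstacle is accurate: the map $A\mapsto-(-A)^{1/(2-\alpha)}$ is not affine, so the vertex inequalities cannot simply be averaged to cover an arbitrary $\mathcal M\in co\{A_1,\dots,A_L\}$. But the resolution you sketch---a spectral-mapping or operator-monotonicity estimate for the fractional power that would let you push the nonlinear transformation through the convex hull---is neither what the paper does nor something one should expect to work without strong extra hypotheses (commutativity, simultaneous triangularisability, or the like). As written, the sufficiency direction of your plan is a genuine gap.

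The paper avoids the nonlinearity altogether by changing the order of operations: it applies the fractional-to-integer conversion \emph{at each vertex first}, and only then forms a convex hull. Concretely, under the assumption $[I^{(1-\alpha)}x]_{t=0}=0$ each subsystem $D^{\alpha}x=A_i x$ is replaced (following \cite{Moze}) by an integer-order system $\dot z=A_{f_i}z$, where $A_{f_i}$ is an explicit block matrix built from $A_i^{1/\alpha}$. The switched fractional system is then identified with the integer-order switched system $\dot z=A_f z$, $A_f\in co\{A_{f_1},\dots,A_{f_L}\}$, and since this polytope is affine in the $A_{f_i}$ the standard common-Lyapunov reduction gives $A_{f_i}^{T}\mathcal P+\mathcal P A_{f_i}<0$ for all $i$ as the equivalent vertex condition. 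These vertex LMIs are finally rewritten, again via \cite{Moze}, first in the (merely sufficient) form $(A_i^{1/\alpha})^{T}P+PA_i^{1/\alpha}<0$ and then in the necessary-and-sufficient form $\big(-(-A_i)^{1/(2-\alpha)}\big)^{T}P+P\big(-(-A_i)^{1/(2-\alpha)}\big)<0$ supplied by Theorem~\ref{Moze0}. Because the integer-order hull is declared to be $co\{A_{f_1},\dots,A_{f_L}\}$ from the start, the question of whether $-(-\mathcal M)^{1/(2-\alpha)}$ coincides with $\sum_i\lambda_i\hat A_i$ simply never arises.
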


\begin{proof}
Assume $\left [ I^{(1-\alpha)} x(t) \right ]_{t=0}=0$, the fractional order system (\ref{FSWHM}) with order $\alpha$, $0<\alpha\leq1$, can be replaced by the following integer order system
\cite{Moze}:
\begin{align}
\label{FSSL1}\dot{z}=A_{f}z , A_{f} \in Co\left\{ A_{f_{1}}, ..., A_{f_{L}}
\right\} \\
z=C_{f}x,
\end{align}
where $A_{f_{i}}=%
\begin{bmatrix}
0 & \cdots & 0 & A_{i}^{1/\alpha}\\
A_{i}^{1/\alpha} & \cdots & 0 & 0\\
& \ddots &  & \vdots\\
\cdots & 0 & A_{i}^{1/\alpha} & 0
\end{bmatrix}
$ and $C_{f}=%
\begin{bmatrix}
0 & \cdots & 0 & 1
\end{bmatrix}
$. Writing (\ref{FSSL1}) in an alternative way yields:
\begin{align}
\dot{z}=A_{f}z, A_{f}=\sum_{i=1}^{L} \lambda_{i} A_{f_{i}}, \forall\lambda_{i}
\geq0, \sum_{i=1}^{L} \lambda_{i}=1.
\end{align}
Therefore, assuming a positive definite matrix $\mathcal{P}>0$ with proper
size and, based on LMI method, the system (\ref{FSWHM}) with order $\alpha$, $0<\alpha\leq1$, is quadratically stable if:
\begin{align}
A^{T}_{f}\mathcal{P}+\mathcal{P}A_{f}<0 \Rightarrow\\
\sum_{i=1}^{L} \lambda_{i}(A^{T}_{f_{i}}\mathcal{P}+\mathcal{P}A_{f_{i}})<0
\Rightarrow\\
A^{T}_{f_{i}}\mathcal{P}+\mathcal{P}A_{f_{i}}<0, \forall i=1,...,L.\label{SCO}%
\end{align}
Then, it is obvious that expression (\ref{SCO}) is satisfied if and only if (\cite{Moze})
\begin{align}
(A^{1/\alpha}_{i})^{T}{P}+{P}A^{1/\alpha}_{i}<0, \forall
i=1,...,L,\label{SCOF}%
\end{align}
where $P$ is a positive definite matrix. In \cite{Moze} it is shown that condition
(\ref{SCOF}) is sufficient but not necessary to guarantee quadratic stability.
The necessary and sufficient condition for fractional order system is
given by Theorem \ref{Moze0}. Therefore, the necessary and sufficient
condition for fractional order system is
\begin{equation}
\left(  -\left(  -A_{i}\right)  ^{\frac{1}{2-\alpha}}\right)  ^{T}P+P\left(
-\left( -A_{i}\right)  ^{\frac{1}{2-\alpha}}\right)  <0, \forall i=1, ..., L.
\end{equation}
\end{proof}

\subsection{Frequency domain stability}
In this section, a link between quadratic stability using Lyapunov theory
and SPR properties will be provided, i.e., a connection between time domain and frequency domain conditions in order to obtain quadratic stability of fractional order switching systems. 

Consider a stable pseudo-polynomial of order $n\alpha$ as:
\begin{equation}
d(s)=s^{n\alpha}+d_{n-1}s^{(n-1)\alpha}+ \cdots+d_{1}s^{\alpha}+d_{0},
\end{equation}
which corresponds to the fractional order system $D^{\alpha}x=Ax$. Furthermore, consider a polynomial of order $2n$ as:
\begin{equation}
c(s)=s^{n}+c_{n-1}s^{(n-1)}+ \cdots+c_{1}s +c_{0},\label{TCP}%
\end{equation}
which corresponds to $\dot{\tilde{x}}=\tilde{A}\tilde{x}$. Assign
\begin{align}
C=%
\begin{bmatrix}
c_{2n-1}, & ..., & c_{1}, & c_{0}%
\end{bmatrix}
,\\
D=%
\begin{bmatrix}
d_{n-1}, & ..., & d_{1}, & d_{0}%
\end{bmatrix}
.
\end{align}
and
\begin{align}
\label{ab1}\tilde{A}=%
\begin{bmatrix}
-c_{n-1} & -c_{n-2} & \cdots & -c_{1} & -c_{0}\\
1 & 0 & \cdots & 0 & 0\\
0 & 1 & \cdots & 0 & 0\\
\vdots & \vdots & \ddots & \vdots & \vdots\\
0 & 0 & \cdots & 0 & 1\\
&  &  &  &
\end{bmatrix}
,\\
A=%
\begin{bmatrix}
-d_{n-1} & -d_{n-2} & \cdots & -d_{1} & -d_{0}\\
1 & 0 & \cdots & 0 & 0\\
0 & 1 & \cdots & 0 & 0\\
\vdots & \vdots & \ddots & \vdots & \vdots\\
0 & 0 & \cdots & 0 & 1\\
&  &  &  &
\end{bmatrix}
.
\end{align}

In the following, the necessary and sufficient condition for the quadratic
stability of fractional order switching systems will be given.

\begin{theorem}
Consider $d_{1}(s)$ and $d_{2}(s)$, two stable pseudo-polynomials of order
$n$, corresponding to the systems $D^{\alpha}x=A_{1}x$ and
$D^{\alpha}x=A_{2}x$ with order $\alpha$, $1\leq\alpha< 2$, respectively, then the following
statements are equivalent:

\begin{enumerate}

\item {%
\begin{align*}
\left|  \arg\left( \det((A_{1}^{2}-\omega^{2}I)-2(j\omega)A_{1}\sin
\frac{\alpha\pi}{2}) \right) - \right. \\
\left.  \arg\left(  \det((A_{2}^{2}-\omega^{2}I)-2(j\omega)A_{2}\sin
\frac{\alpha\pi}{2})\right) \right|  < \frac{\pi}{2}, \forall\omega.
\end{align*}
}

\item { $A_{1}$ and $A_{2}$ are quadratically stable meaning that: $\exists P
=P^{T} >0 \in\mathbb{R}^{n\times n}$ such that
\begin{align*}%
\begin{bmatrix}
\left(  A_{i}^{T}P+PA_{i} \right) \sin(\phi) & \left(
A_{i}^{T}P-PA_{i} \right) \cos(\phi)\\
\left(  -A_{i}^{T}P+PA_{i} \right) \cos(\phi) & \left(
A_{i}^{T}P+PA_{i} \right) \sin(\phi)
\end{bmatrix}
<0,\\
\forall i=1, 2.
\end{align*}
}
\end{enumerate}

\label{Freq_stab_frac}
\end{theorem}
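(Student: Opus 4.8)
The plan is to establish the equivalence by connecting the frequency‑domain angle condition (statement 1) with the quadratic stability LMI of Theorem~\ref{FSQ} (statement 2), using Theorem~\ref{Freq_stab} as the bridge for the integer‑order case and then transporting that result to the fractional setting. The key observation I would exploit is that the block LMI in statement~2 is exactly the Moze condition of Theorem~\ref{Moze} for each subsystem $A_i$, so statement~2 is, by Theorem~\ref{FSQ}, equivalent to the $t^{-a}$ quadratic stability of the fractional switched system. Thus the real content is to show that this LMI condition is captured in frequency domain by the determinant expression appearing in statement~1.

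First I would compute, for a fractional system $D^{\alpha}x = A x$ with $1\le\alpha<2$, the characteristic pseudo‑polynomial and relate it to an associated integer‑order polynomial. The natural device is to note that the stability region for order $\alpha$ is a cone of half‑angle $\phi=\tfrac{\alpha\pi}{2}$ in the complex plane, and that mapping $s \mapsto s^{\alpha}$ (or equivalently analyzing $\det(s^{\alpha}I - A)$ along the rays $\arg s = \pm\phi$) converts the fractional stability test into a test on the boundary of that cone. Evaluating $\det\bigl((j\omega)^{\alpha}I - A\bigr)$ and its complex conjugate, and forming the product, I expect to recover precisely the factor $\det\bigl((A^2-\omega^2 I) - 2(j\omega)A\sin\tfrac{\alpha\pi}{2}\bigr)$: the two conjugate rays at angles $\pm\phi$ contribute the $A^2$ and $-\omega^2 I$ terms, while the cross term produces the $2(j\omega)A\sin\phi$ piece. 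This identifies the determinant in statement~1 as the relevant stable polynomial $d_i(s)$ evaluated on the imaginary axis of the transformed variable.

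Next I would apply Theorem~\ref{Freq_stab} directly to these transformed polynomials. Once the determinant $\det\bigl((A_i^2-\omega^2 I)-2(j\omega)A_i\sin\phi\bigr)$ is recognized as (up to a nonvanishing real factor that does not affect the argument difference) the characteristic polynomial $c_i(s)$ of an integer‑order realization whose quadratic stability is equivalent to the fractional LMI, the angle condition $\lvert\arg c_1(j\omega)-\arg c_2(j\omega)\rvert<\tfrac{\pi}{2}$ of Theorem~\ref{Freq_stab} becomes exactly statement~1, and its equivalence to the common‑$P$ LMI becomes statement~2. The logical chain is therefore: statement~1 $\Leftrightarrow$ (angle condition of Theorem~\ref{Freq_stab} for the transformed polynomials) $\Leftrightarrow$ (SPRness / common $P$ for the transformed systems) $\Leftrightarrow$ statement~2 via Theorem~\ref{FSQ}.

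The main obstacle I anticipate is verifying the algebraic identity that turns $\det\bigl((j\omega)^{\alpha}I - A\bigr)\cdot\overline{\det\bigl((j\omega)^{\alpha}I - A\bigr)}$ into the stated determinant, and in particular justifying that the argument (phase) of this product is what governs the cone‑membership condition for all frequencies $\omega$. I would need to check that no spurious sign changes or vanishing leading coefficients occur that would break the monotone correspondence between the frequency‑domain phase and the LMI feasibility, and that the $\sin\phi$ and $\cos\phi$ weightings in the block matrix match the real and imaginary parts extracted from the conjugate‑ray product. Care is also required because the determinant is a polynomial in $\omega$ of fixed degree, so I must confirm it is a \emph{stable} polynomial in the transformed variable in order to legitimately invoke Theorem~\ref{Freq_stab}, whose hypotheses require stability of $d_1,d_2$; this is guaranteed by the standing assumption that $d_1(s)$ and $d_2(s)$ are stable pseudo‑polynomials.
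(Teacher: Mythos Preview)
Your overall strategy --- reduce to Theorem~\ref{Freq_stab} applied to an associated integer-order system --- is the right one, but the specific algebraic bridge you propose does not work. The product $\det\bigl((j\omega)^{\alpha}I-A\bigr)\cdot\overline{\det\bigl((j\omega)^{\alpha}I-A\bigr)}$ is a modulus squared, hence real and nonnegative for every $\omega$; its argument is identically zero, so it cannot equal $\det\bigl((A^{2}-\omega^{2}I)-2(j\omega)A\sin\phi\bigr)$, which is genuinely complex. If you carry out your computation you will find instead $\det\bigl(A^{2}+\omega^{2\alpha}I-2\omega^{\alpha}\cos\phi\,A\bigr)$, with the wrong sign, the wrong power of $\omega$, and $\cos\phi$ in place of $\sin\phi$. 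So the ``conjugate-ray product'' does not recover the determinant in statement~1, and without that identification you have no polynomial $c_i$ to which Theorem~\ref{Freq_stab} can be applied.

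What is missing is the explicit integer-order lift. The paper introduces, for each $i$, the $2n\times 2n$ matrix
\[
\tilde{A}_i=\begin{bmatrix} A_i\sin\phi & A_i\cos\phi\\ -A_i\cos\phi & A_i\sin\phi\end{bmatrix},
\]
and applies Theorem~\ref{Freq_stab} to $\dot{\tilde x}=\tilde A_i\tilde x$ with $c_i(s)=\det(sI-\tilde A_i)$. Because the blocks commute, $\det(j\omega I-\tilde A_i)=\det\bigl((j\omega I-A_i\sin\phi)^2+A_i^2\cos^2\phi\bigr)=\det\bigl(A_i^2-\omega^2 I-2j\omega A_i\sin\phi\bigr)$, which is exactly the expression in statement~1. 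On the LMI side, taking $\mathcal P=\mathrm{diag}(P,P)$ gives $\tilde A_i^{T}\mathcal P+\mathcal P\tilde A_i$ equal to the block matrix in statement~2. Thus both statements are simply the two sides of Theorem~\ref{Freq_stab} written for the lifted systems $\tilde A_1,\tilde A_2$. Your proposal never constructs this lift, and the alternative route you sketch collapses because the object you form has trivial phase.
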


\begin{proof}
Consider $c_{1}(s)$ and $c_{2}(s)$ are characteristic polynomials
corresponding to $\dot{\tilde{x}}=\tilde{A}_{1}\tilde{x}$ and $\dot{\tilde{x}%
}=\tilde{A}_{2}\tilde{x}$, respectively, where $\tilde{A}_{i}=%
\begin{bmatrix}
A_{i} \sin(\phi) & A_{i} \cos(\phi)\\
-A_{i} \cos(\phi) & A_{i} \sin(\phi)
\end{bmatrix}, i=1,2.$ According to Theorem \ref{Freq_stab}, the following statements are equivalents:

\begin{enumerate}
\item {$\frac{c_{1}(s)}{c_{2}(s)}$ and $\frac{c_{2}(s)}{c_{1}(s)}$ are SPR,
where $c_{i}(s)=\det(sI-\tilde{A}_{i}), i=1,2$.}

\item {$\left|  \arg(c_{1}(j\omega)) - \arg(c_{2}(j\omega))\right|  <
\frac{\pi}{2}$ $\forall$ $\omega$.}

\item { $\tilde{A}_{1}$ and $\tilde{A}_{2}$ are quadratically stable meaning
that: $\exists\mathcal{P} =\mathcal{P}^{T} >0 \in\mathbb{R}^{2n\times2n}$ such
that $\tilde{A}_{1}^{T} \mathcal{P}+ \mathcal{P}\tilde{A}_{1} <0$ , $\tilde
{A}_{2}^{T} \mathcal{P}+ \mathcal{P}\tilde{A}_{2} <0$.}
\end{enumerate}

Now, consider $d_{1}(s)$ and $d_{2}(s)$ are characteristic pseudo-polynomials
corresponding to the fractional order systems $D^{\alpha}x=A_{1}x$ and $D^{\alpha
}x=A_{2}x$ with order $\alpha$, $1 \leq\alpha< 2$, respectively. The relation between
$A_{i}$ and $\tilde{A}_{i}$ is given by (\ref{ab1}).
From (\ref{TCP}), we have
\begin{align*}
\left|  \arg(c_{1}(j\omega)) - \arg(c_{2}(j\omega))\right| =\\
\left|  \arg(j\omega I-\tilde{A_{1}}) - \arg(j\omega I-\tilde{A_{2}})\right|
< \frac{\pi}{2},\\
\Leftrightarrow\left|  \arg\left( \det((A_{1}^{2}-\omega^{2}I)-2(j\omega
)A_{1}\sin(\phi) \right) - \right. \\
\left.  \arg\left(  \det((A_{2}^{2}-\omega^{2}I)-2(j\omega)A_{2}\sin
(\phi)\right) \right|  < \frac{\pi}{2}, \forall\omega.
\end{align*}
where $I$ is the identity matrix with the proper size.

Define, $\mathcal{P} =%
\begin{bmatrix}
P & 0\\
0 & P
\end{bmatrix}
, P=P^{T}>0, P \in\mathbb{R}^{n\times n}$. Then,
\begin{align*}
\tilde{A}_{i}^{T} \mathcal{P}+ \mathcal{P}\tilde{A}_{i}=
\begin{bmatrix}
A_{i}^{T} \sin(\phi) & -A_{i}^{T} \cos(\phi)\\
A_{i}^{T} \cos(\phi) & A_{i}^{T} \sin(\phi)
\end{bmatrix}
\begin{bmatrix}
P & 0\\
0 & P
\end{bmatrix}
+\\%
\begin{bmatrix}
P & 0\\
0 & P
\end{bmatrix}
\begin{bmatrix}
A \sin(\phi) & A \cos(\phi)\\
-A \cos(\phi) & A \sin(\phi)
\end{bmatrix}
<0,\\
\Leftrightarrow%
\begin{bmatrix}
\left(  A_{i}^{T}P+PA_{i} \right) \sin(\phi) & \left(
A_{i}^{T}P-PA_{i} \right) \cos(\phi)\\
\left(  -A_{i}^{T}P+PA_{i} \right) \cos(\phi) & \left(
A_{i}^{T}P+PA_{i} \right) \sin(\phi)
\end{bmatrix}
<0,\\
\forall i=1, 2.
\end{align*}
\end{proof}

Therefore, the theorem is proved.

\begin{theorem}
Consider two stable fractional order systems $D^{\alpha}x=A_{1}x$ and
$D^{\alpha}x=A_{2}x$ with order $\alpha$, $0<\alpha\leq1$, then the following
statements are equivalent:

\begin{enumerate}
\item {$\left|  \arg(\det(\mathcal{A}_{1} -j\omega I) )-\arg(\det
(\mathcal{A}_{2} -j\omega I))\right|  < \frac{\pi}{2}$ $\forall$ $\omega$.}

\item {$A_{1}$ and $A_{2}$ are quadratically stable, which means that $\exists P
=P^{T} >0 \in\mathbb{R}^{n\times n}$ such that }%

\[
\left(  -\left(  -A_{i}\right)  ^{\frac{1}{2-\alpha}}\right)  ^{T}P+P\left(
-\left( -A_{i}\right)  ^{\frac{1}{2-\alpha}}\right)  <0, \forall i=1, 2,
\]
\end{enumerate}

where $\mathcal{A}_{i}=-\left( -A_{i}\right)  ^{\frac{1}{2-\alpha}}, \forall
i=1, 2$ and $I$ is the identity matrix. \label{Freq_stab_frac0}
\end{theorem}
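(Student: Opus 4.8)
The plan is to mirror the argument used for Theorem~\ref{Freq_stab_frac}, reducing the fractional problem to an integer-order one and then invoking the Karimi equivalence of Theorem~\ref{Freq_stab}. The bridge is Theorem~\ref{Moze0}: for $0<\alpha\leq 1$ the quadratic stability of the fractional system $D^{\alpha}x=A_i x$ is governed by the matrix $\mathcal{A}_i=-\left(-A_i\right)^{1/(2-\alpha)}$, in the precise sense that statement~2 of the present theorem is exactly the assertion that the \emph{integer-order} systems $\dot{x}=\mathcal{A}_1 x$ and $\dot{x}=\mathcal{A}_2 x$ admit a common quadratic Lyapunov function.

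First I would fix $\mathcal{A}_i=-\left(-A_i\right)^{1/(2-\alpha)}$ for $i=1,2$. Since the two fractional systems are assumed stable, each $\mathcal{A}_i$ is, by Theorem~\ref{Moze0}, a well-defined Hurwitz matrix. I would then set $c_i(s)=\det(sI-\mathcal{A}_i)$, the characteristic polynomials of $\mathcal{A}_1$ and $\mathcal{A}_2$; these are stable polynomials of degree $n$, so the hypotheses of Theorem~\ref{Freq_stab} are met for the pair $\mathcal{A}_1,\mathcal{A}_2$.

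Next I would apply Theorem~\ref{Freq_stab} directly to $\mathcal{A}_1$ and $\mathcal{A}_2$. Its condition~3 reads $\mathcal{A}_i^{T}P+P\mathcal{A}_i<0$, which upon substituting $\mathcal{A}_i=-\left(-A_i\right)^{1/(2-\alpha)}$ is verbatim statement~2 of the present theorem. Its condition~2 reads $\left|\arg(c_1(j\omega))-\arg(c_2(j\omega))\right|<\frac{\pi}{2}$ for all $\omega$. To turn this into statement~1, I would use the identity $c_i(j\omega)=\det(j\omega I-\mathcal{A}_i)=(-1)^n\det(\mathcal{A}_i-j\omega I)$; the common constant phase $n\pi$ contributed by the factor $(-1)^n$ cancels when the two arguments are subtracted, so the phase condition becomes $\left|\arg(\det(\mathcal{A}_1-j\omega I))-\arg(\det(\mathcal{A}_2-j\omega I))\right|<\frac{\pi}{2}$, which is statement~1. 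Chaining the two equivalences furnished by Theorem~\ref{Freq_stab} then gives $1\Leftrightarrow 2$.

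The routine parts are the determinant identity for $\det(j\omega I-\mathcal{A}_i)$ and the cancellation of the constant phase in the difference of arguments. The main point requiring care is the passage through the matrix fractional power: I must confirm that $\mathcal{A}_i=-\left(-A_i\right)^{1/(2-\alpha)}$ is a genuine real matrix and is Hurwitz exactly when the fractional system is stable, so that Theorem~\ref{Freq_stab} is legitimately applicable. This is precisely the content supplied by Theorem~\ref{Moze0}, and it is where the restriction $0<\alpha\leq1$ is used.
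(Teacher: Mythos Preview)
Your proposal is correct and follows essentially the same route as the paper: define the integer-order surrogates $\mathcal{A}_i=-\left(-A_i\right)^{1/(2-\alpha)}$, observe that statement~2 is precisely the common-Lyapunov condition for $\dot{x}=\mathcal{A}_i x$, and then invoke Theorem~\ref{Freq_stab} to obtain the equivalent phase condition. The paper's own proof is terser (it simply cites Theorem~\ref{Freq_stab} together with Theorem~\ref{FSQ00} and declares the rest straightforward), so your added remarks on Hurwitzness of $\mathcal{A}_i$ and on the $(-1)^n$ cancellation are just the details the paper leaves implicit.
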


\begin{proof}
Define $c_{i}(s)=\det(\mathcal{A}_{i}-sI)$, $i=1,2$. According to Theorem
\ref{Freq_stab} and common quadratic stability theorem for fractional order
system with order $\alpha$, $0<\alpha\leq1$, i.e., Theorem \ref{FSQ00}, proof is straightforward.
\end{proof}

Although the theory developed in the frequency domain no necessarily proves the SPRness, a relation was obtained as an equivalent issue of quadratic stability.
Concerning the ease of designing fractional order controllers in frequency
domain, the stability analysis in frequency domain will be really useful for
fractional order switching systems.

\section{Illustrative Examples}\label{sec_examples}

In this section, some examples are given in order to show the applicability of the  theories developed for fractional order switching systems.

\begin{example}
Let us consider the switching system (\ref{FSWHM}) with order $\alpha=0.6$,
where $A_{1}=%
\begin{bmatrix}
0.3529 & 1.6044\\
-1.6044 & -4.4602
\end{bmatrix}
$ and $A_{2}=%
\begin{bmatrix}
0.3661 & 0.9237\\
-0.4618 & -0.1558
\end{bmatrix}
$. Applying Theorem \ref{FSQ00} yields:
\begin{align*}
\mathcal{A}_{1}=%
\begin{bmatrix}
0 & 1\\
-1 & -3
\end{bmatrix}
, \mathcal{A}_{2}=%
\begin{bmatrix}
0 & 1\\
-0.5 & -0.5
\end{bmatrix}
\end{align*}
Then, choosing a common matrix $P=%
\begin{bmatrix}
3 & 1\\
1 & 4
\end{bmatrix}
$, the stability conditions
\begin{align*}
\left(  -\left(  -A_{1}\right)  ^{\frac{1}{2-\alpha}}\right)  ^{T}P+P\left(
-\left( -A_{1}\right)  ^{\frac{1}{2-\alpha}}\right)  =%
\begin{bmatrix}
-2 & -4\\
-4 & -22
\end{bmatrix}
<0\\
\left(  -\left(  -A_{2}\right)  ^{\frac{1}{2-\alpha}}\right)  ^{T}P+P\left(
-\left( -A_{2}\right)  ^{\frac{1}{2-\alpha}}\right)  =%
\begin{bmatrix}
-1 & 0.5\\
0.5 & -2
\end{bmatrix}
<0
\end{align*}
are satisfied and the switching system is quadratically stable. Now let us
compare the results with the frequency domain analysis. Applying Theorem
\ref{Freq_stab_frac0}, the following condition
\begin{align}
\left|  \tan^{-1}\left( \frac{3\omega}{1+\omega^{2}}\right) - \tan^{-1}\left(
\frac{0.5\omega}{0.5+\omega^{2}}\right)  \right|  < \frac{\pi}{2},
\forall\omega\label{Exp1_Freq}%
\end{align}
should be satisfied. The phase difference of (\ref{Exp1_Freq}) is depicted in Fig.
\ref{Exp1_fig}. As can be observed, the maximum phase difference is $51.51^{o}$, which is less than $90^{o}$, that implies the switching stability condition is satisfied and the system is quadratically stable. 
\begin{figure}[ptbh]
\begin{center}
\includegraphics[width=0.5\textwidth]{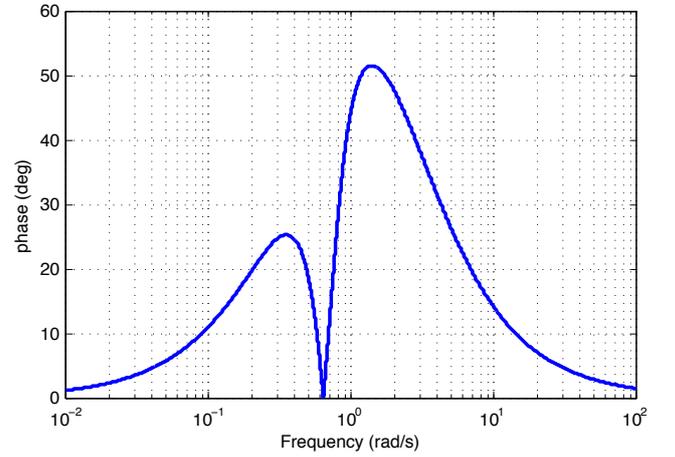}
\end{center}
\caption{Phase difference of condition (\ref{Exp1_Freq}) for the system in Example 1}%
\label{Exp1_fig}%
\end{figure}
\end{example}

\begin{example}
Now, let us consider the switching system given by (\ref{FSWHM}) with order $\alpha
=0.75$, where $A_{1}=%
\begin{bmatrix}
0 & 1\\
-1 & 0.5
\end{bmatrix}
$ and $A_{2}=%
\begin{bmatrix}
0 & 1\\
-0.5 & 0.1
\end{bmatrix}
$. Applying Theorem \ref{Freq_stab_frac0}, we have:
\begin{align*}
\mathcal{A}_{1}=%
\begin{bmatrix}
-0.3684 & 1.0263\\
-1.0263 & 0.1448
\end{bmatrix}
, \mathcal{A}_{2}=%
\begin{bmatrix}
-0.2450 & 1.0390\\
-0.5195 & -0.1411
\end{bmatrix}
\end{align*}
and the following frequency domain condition
\begin{align}
\left|  \tan^{-1}\left( \frac{0.2236\omega}{1+\omega^{2}}\right) - \tan
^{-1}\left( \frac{0.3861\omega}{0.4977+\omega^{2}}\right)  \right|  <
\frac{\pi}{2}, \forall\omega\label{Exp2_Freq}%
\end{align}
should be satisfied. The same as previous example, condition (\ref{Exp2_Freq}) is
depicted in Fig.~\ref{Exp2_fig}. It is shown that the maximum
phase difference of condition (\ref{Exp2_Freq}) is $80.02^{o}$, so the switching system is quadratically stable.

\begin{figure}[ptbh]
\begin{center}
\includegraphics[width=0.5\textwidth]{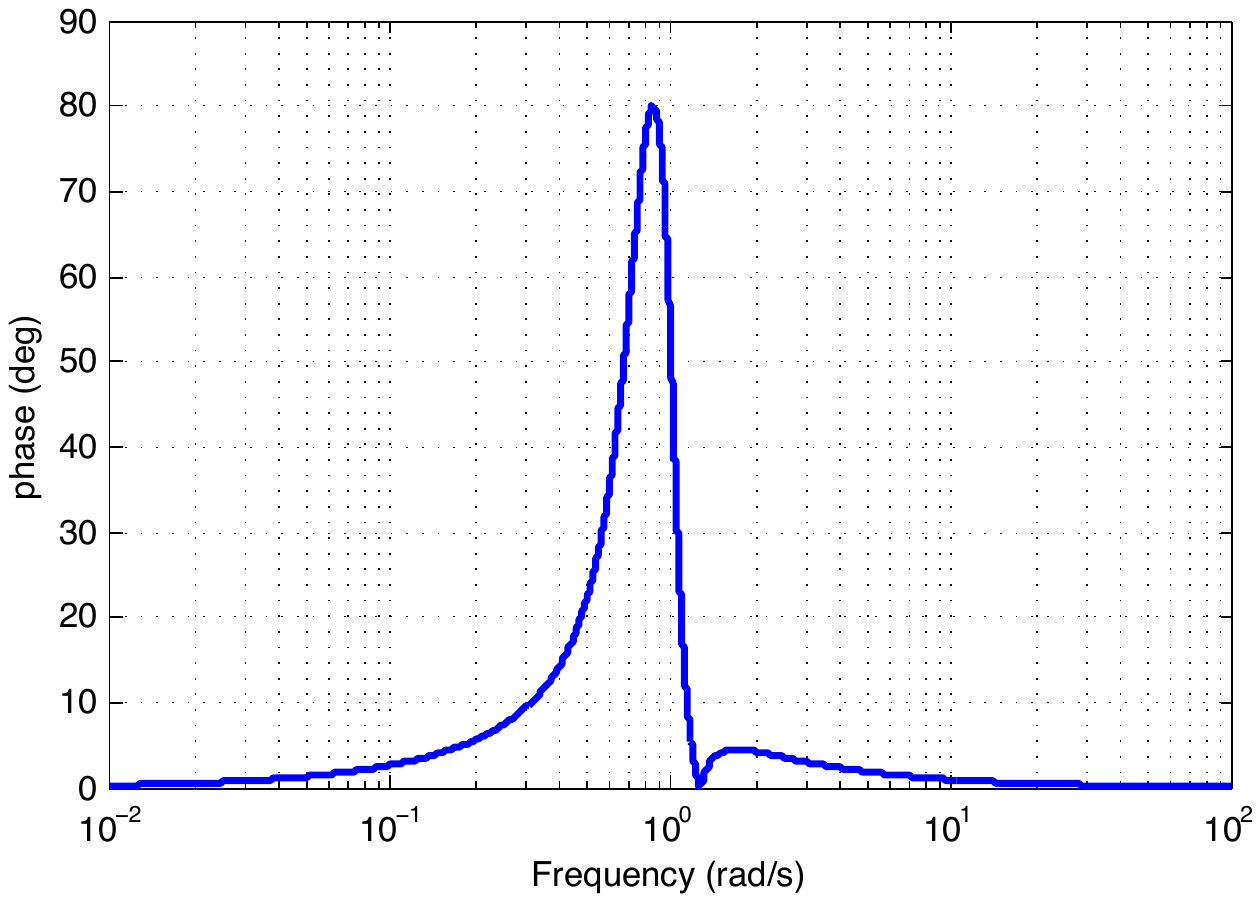}
\end{center}
\caption{Phase difference of condition (\ref{Exp2_Freq}) for the system in Example 2}%
\label{Exp2_fig}%
\end{figure}
\end{example}

\begin{example}
Let us now consider the same switching system as in Example 2, but with an order
bigger than $1$, $1<\alpha<2$. Applying Theorem \ref{Freq_stab_frac}, the following condition:
\begin{align}
\left|  \arg\left( \det\left(
\begin{bmatrix}
-1.75-\omega^{2} & -2.1-2j\sin(\phi)\\
3.675+3.5j\sin(\phi) & 2.66-\omega^{2}+4.2j\sin(\phi)%
\end{bmatrix}
\right)  \right) - \right. \nonumber\\
\left.  \arg\left( \det\left(
\begin{bmatrix}
-3-\omega^{2} & -3-2j\sin(\phi)\\
9+6j\sin(\phi) & 6-\omega^{2}+6j\sin(\phi)%
\end{bmatrix}
\right)  \right) \right|  < \frac{\pi}{2}, \forall\omega\label{exp3eq}%
\end{align}
where $\phi=\frac{\alpha\pi}{2}$ should be satisfied for all $\alpha$, $1<\alpha<2$. Figure~\ref{Exp3_Var} represents the condition (\ref{exp3eq}) when the fractional
order $\alpha$ is changing in the interval $(1,2)$. In order to make this example clearer, the interval of variation of $\alpha$ is divided into three subintervals. As a matter of fact, Fig.~\ref{Exp3_Var} (a) shows the phase difference (\ref{exp3eq}) for systems with the order $\alpha\in(1,1.5]\cup\lbrack1.7,2)$, whereas Fig.~\ref{Exp3_Var} (b) corresponds to systems with order $\alpha\in(1.5,1.7)$. As can be seen, the system is quadratically stable if its order $\alpha\in(1,1.5]\cup\lbrack1.7,2)$. The stability region of the considered system  is shown in Fig.~\ref{Exp3_max}, in which the maximum values of (\ref{exp3eq}) are plotted versus its order $\alpha$. 

\begin{figure}[ptbh]
\begin{center}
\includegraphics[width=0.5\textwidth]{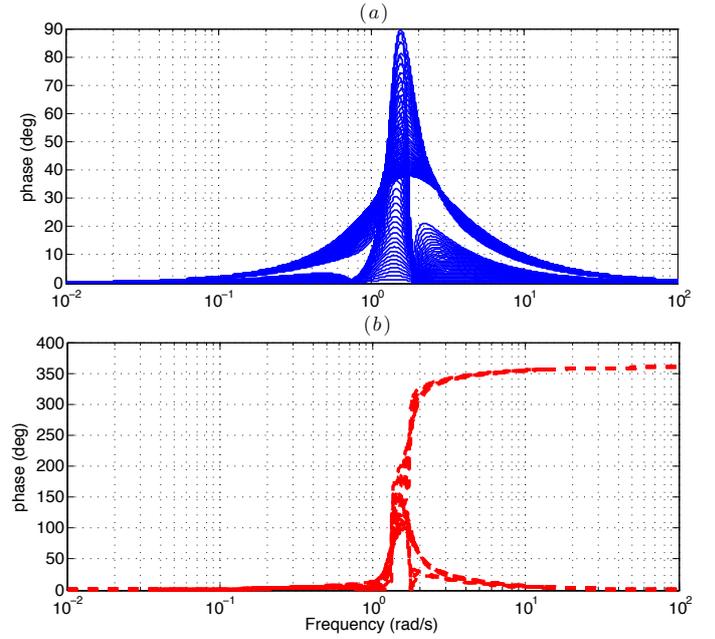}
\end{center}
\caption{Phase difference of condition (\ref{exp3eq}) for the system in Example 3 for different values of the order $\alpha$: (a) $\alpha\in(1,1.5]\cup\lbrack1.7,2)$ (b) $\alpha\in(1.5,1.7)$}%
\label{Exp3_Var}%
\end{figure}

\begin{figure}[ptbh]
\begin{center}
\includegraphics[width=0.5\textwidth]{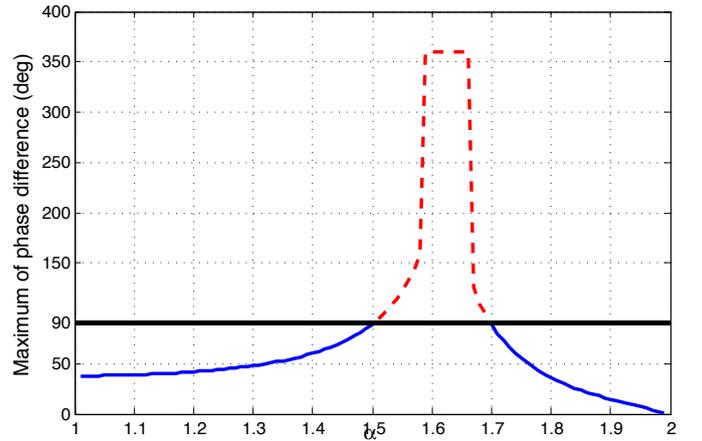}
\end{center}
\caption{Maximum value of (\ref{exp3eq}) versus $\alpha$}%
\label{Exp3_max}%
\end{figure}
\end{example}

\section{Conclusion}\label{sec_conclu}

This paper studies the quadratic stability for fractional order switching systems. In particular, equivalent Lyapunov conditions in frequency domain are developed for this kind of systems to prove their quadratic stability. Some illustrative examples are given to show the applicability and validation of the proposed theory.

Our future efforts will focus on finding a relation between the frequency domain method proposed in this paper and SPRness.

\bibliographystyle{plain}
\bibliography{ifacconf}








\end{document}